\newtheorem{definition}{Definition}[section]
\newtheorem{theorem}[definition]{Theorem}
\author{Michael Haythorpe\affiliationmark{1}}
\title[Reducing the generalised Sudoku problem to the Hamiltonian cycle problem]{Reducing the generalised Sudoku problem to the Hamiltonian cycle problem}
\affiliation{
  Flinders University, Australia}
\keywords{Sudoku, NP-complete, Reduction, Hamiltonian cycle problem}
\begin{document}
\publicationdetails{VOL}{2015}{ISS}{NUM}{SUBM}
\maketitle
\begin{abstract}
  The generalised Sudoku problem with $N$ symbols is known to be NP-complete, and hence is equivalent to any other NP-complete problem, even for the standard restricted version where $N$ is a perfect square. In particular, generalised Sudoku is equivalent to the, classical, Hamiltonian cycle problem. A constructive algorithm is given that reduces generalised Sudoku to the Hamiltonian cycle problem, where the resultant instance of Hamiltonian cycle problem is sparse, and has $O(N^3)$ vertices. The Hamiltonian cycle problem instance so constructed is a directed graph, and so a (known) conversion to undirected Hamiltonian cycle problem is also provided so that it can be submitted to the best heuristics. A simple algorithm for obtaining the valid Sudoku solution from the Hamiltonian cycle is provided. Techniques to reduce the size of the resultant graph are also discussed.
\end{abstract}
\section{Introduction}\label{sec-Introduction}

The {\em generalised Sudoku problem} is an NP-complete problem which, effectively, requests a Latin square that satisfies some additional constraints. In addition to the standard requirement that each row and column of the Latin square contains each symbol precisely once, Sudoku also demands {\em block constraints}. If there are $N$ symbols, the Latin square is of size $N \times N$. If $N$ is a perfect square, then the Latin square can be divided into $N$ regions of size $\sqrt{N} \times \sqrt{N}$, called {\em blocks}. Then the block constraints demand that each of these blocks also contain each of the symbols precisely once. Typically, the symbols in a Sudoku puzzle are simply taken as the natural numbers $1$ to $N$. In addition, Sudoku puzzles typically have fixed values in some of the cells, which dramatically limits the number of valid solutions. If the fixed values are such that only a unique solution remains, the Sudoku puzzle is said to be {\em well-formed}.

The standard version where $N = 9$ has, in recent years, become a common form of puzzle found in newspapers and magazines the world over. Although variants of the problem have existed for over a century, Sudoku in its current format is a fairly recent problem, first published in 1979 under the name Number Place. The name Sudoku only came into existence in the 1980s. In 2003, the generalised Sudoku problem was shown to be ASP-complete \cite{asp}, which in turn implies that it is NP-complete. Hence, it is theoretically as difficult as any problems in the set $\mathcal{NP}$ of decision problems for which a positive solution can be certified in polynomial time. Note that although there are more general versionsvariants of Sudoku (such as rectangular versions), the square variant described above where $N$ is a perfect square suffices for NP-completeness. Hence, for the remainder of this manuscript, it will be assumed that we are restricted to considering the square variant.

Since being shown to be NP-complete, Sudoku has subsequently been converted to various NP-complete problems, most notably constraint satisfaction \cite{csp}, boolean satisfiability \cite{sat} and integer programming \cite{ip}. Another famous NP-complete problem is the {\em Hamiltonian cycle problem} (HCP), which is defined as follows. For a simple graph (that is, one containing no self-loops or multi-edges) containing vertex set $V$ and edge set $E : V \rightarrow V$, determine whether any simple cycles containing all vertices in $V$ exist in the graph. Such cycles are called {\em Hamiltonian cycles}, and a graph containing at least one Hamiltonian cycle is called {\em Hamiltonian}. Although HCP is defined for directed graphs, in practice most heuristics that actually solve HCP are written for undirected graphs.

Since both Sudoku and HCP are NP-complete, it should be possible to reduce Sudoku to HCP. In this manuscript, a constructive algorithm that constitutes such a reduction is given. The resultant instance of HCP is a sparse graph or order $O(N^3)$. If many values are fixed, it is likely that the resultant graph can be made smaller by clever graph reduction heuristics; to this end, we apply a basic graph reduction heuristic to two example Sudoku instances to investigate the improvement offered.

It should be noted that reductions of NP-complete problems to HCP is an interesting but still largely unexplored field of research. Being one of the classical NP-complete problems (indeed, one of the initial 21 NP-complete problems described by Karp \cite{karp}), HCP is widely studied and several very efficient algorithms for solving HCP exist. HCP is also an attractive target problem in many cases because the resultant size of the instance is relatively small by comparison to other potential target problems. Indeed, the study of which NP-complete problems provide the best target frameworks for reductions is an ongoing field of research. For more on this topic, as well as examples of other reductions to HCP, the interested reader is referred to \cite{dewdney,creignou,setsplitting,hcp23hcp}.

\section{Conversion to HCP}\label{sec-conversion}

At it's core, a Sudoku problem with $N$ symbols (which we will consider to be the natural numbers from 1 to $N$) has three sets of constraints to be simultaneously satisfied.

\begin{enumerate}\item Each of the $N$ blocks must contain each number from 1 to $N$ precisely once.
\item Each of the $N$ rows must contain each number from 1 to $N$ precisely once.
\item Each of the $N$ columns must contain each number from 1 to $N$ precisely once.\end{enumerate}

The variables of the problem are the $N^2$ cells, which can each be assigned any of the $N$ possible values, although some of the cells may have fixed values depending on the instance.

In order to cast an instance of Sudoku as an instance of Hamiltonian cycle problem, we need to first encode every possible variable choice as a subgraph. The idea will be that traversing the various subgraphs in certain ways will correspond to particular choices for each of the variables. Then, we will link the various subgraphs together in such a way that they can only be consecutively traversed if none of the constraints are violated by the variable choices.

In the final instance of HCP that is produced, the vertex set $V$ will comprise of the following, where $a$, $i$, $j$ and $k$ all take values from $1$ to $N$:

\begin{itemize}\item A single starting vertex $s$ and finishing vertex $f$
\item Block vertices: $N^2$ vertices $b_{ak}$, corresponding to number $k$ in block $a$
\item Row vertices: $N^2$ vertices $r_{ik}$, corresponding to number $k$ in row $i$
\item End Row vertices: $N$ vertices $t_i$ corresponding to row $i$
\item Column vertices: $N^2$ vertices $c_{jk}$ corresponding to number $k$ in column $j$
\item End Column vertices: $N$ vertices $d_j$ corresponding to column $j$
\item Puzzle vertices: $3N^3$ vertices $x_{ijkl}$ corresponding to number $k$ in position $(i,j)$, for $l = 1, 2, 3$
\item End Puzzle vertices: $N^2$ vertices $v_{ij}$ corresponding to position $(i,j)$
\item Duplicate Puzzle vertices: $3N^3$ vertices $y_{ijkl}$ corresponding to number $k$ in position $(i,j)$, for $l = 1, 2, 3$
\item End Duplicate Puzzle vertices: $N^2$ vertices $w_{ij}$ corresponding to position $(i,j)$\end{itemize}

The graph will be linked together in such a way that any valid solution to the Sudoku puzzle will correspond to a Hamiltonian cycle in the following manner.

\begin{enumerate}\item The starting vertex $s$ is visited first.
\item For each $a$ and $k$, suppose number $k$ is placed in position $(i,j)$ in block $a$. Then, vertex $b_{ak}$ is visited, followed by all $x_{ijml}$ for $m \neq k$, followed by all $y_{ijml}$ for $m \neq k$. This process will ensure constraint 1 is satisfied.
\item For each $i$ and $k$, suppose number $k$ is placed in position $(i,j)$ in row $i$. Then, vertex $r_{ik}$ is visited, followed by $x_{ijk3}$, $x_{ijk2}$, $x_{ijk1}$ and then $v_{ij}$. If $k = N$ (ie if $i$ is about to be incremented or we are finished step 3) then this is followed by $t_i$. This process will ensure constraint 2 is satisfied.
\item For each $j$ and $k$, suppose number $k$ is placed in position $(i,j)$ in column $j$. Then, vertex $c_{jk}$ is visited, followed by $y_{ijk3}$, $y_{ijk2}$, $y_{ijk1}$ and then $w_{ij}$. If $k = N$ (ie if $j$ is about to be incremented or we are finished step 4) then this is followed by $d_j$. This process will ensure constraint 3 is satisfied.
\item The finishing vertex $f$ is visited last and the Hamiltonian cycle returns to $s$.\end{enumerate}

What follows is a short description of how steps 1--5 are intended to work. A more detailed description follows in the next section.

The idea of the above is that we effectively create two identical copies of the Sudoku puzzle. In step 2, we place numbers in the puzzles, which are linked together in such a way to ensure the numbers are placed identically in both copies. Placing a number $k$ into position $(i,j)$, contained in block $a$, is achieved by first visiting $b_{ak}$, and then proceeding to visit every puzzle vertex $x_{ijml}$ {\bf except} for when $m = k$, effectively leaving the assigned number \lq\lq open", or unvisited. Immediately after visiting the appropriate puzzle vertices, the exact same duplicate puzzle vertices $y_{ijml}$ are visited as well, leaving the assigned number unvisited in the second copy as well. Since each block vertex $b_{ak}$ is only visited once, each number is placed precisely once in each block, satisfying constraint 1. The hope is, after satisfying constraint 1, that the row and column constraints have also been satisfied. If not, it will prove impossible to complete steps 3 and 4 without needing to revisit a vertex that was visited in step 2.

In step 3, we traverse the row vertices one at a time. If number $k$ was placed in position $(i,j)$, then row vertex $r_{ik}$ is followed by the unvisited vertices $x_{ijk3}$, $x_{ijk2}$, $x_{ijk1}$, and then by the end puzle vertex $v_{ij}$. Once all $r_{ik}$ vertices have been traversed for a given $i$, we visit the end row vertex $t_i$. Note that the three $x$ vertices visited for each $i$ and $k$ in step 3 are the three that were skipped in step 2. Therefore, every puzzle vertex is visited by the time we finish traversing all the row vertices. However, if row $i$ is missing the number $k$, then there will be no available unvisited puzzle vertices to visit after $r_{ik}$, so this part of the graph can only be traversed if all the row constraints are satisfied by the choices in step 2.

Step 4 evolves analogously to step 3, except for $c_{jk}$ instead of $r_{ik}$, $y_{ijkl}$ instead of $x_{ijkl}$, $w_{ij}$ instead of $v_{ij}$ and $d_j$ instead of $t_i$. Hence, this part of the graph can only be traversed if all the column constraints are also satisfied by the choices in step 2.

Assuming the graph must be traversed as described above, it is clear that all Hamiltonian cycles in the resultant instance of HCP correspond to valid Sudoku solutions. In order to show this is the case, we first describe the set of directed edges $E$ in the graph. Note that in each of the following, if $k+1$ or $k+2$ are bigger than $N$, they should be wrapped back around to a number between $1$ and $N$ by subtracting $N$. For example, if $k+2 = N+1$ then it should be taken as 1 instead.

\begin{itemize}\item $(s\;,\;b_{11}), (d_N\;,\; f)$ and $(f\;,\;s)$
\item $(b_{ak}\;,\; x_{i,j,(k+1),1})$ for all $a, k$, and $(i,j)$ contained in block $a$
\item $(x_{ijk1}\;,\;x_{ijk2}), (x_{ijk2}\;,\;x_{ijk1}), (x_{ijk2}\;,\;x_{ijk3})$ and $(x_{ijk3}\;,\;x_{ijk2})$ for all $i, j, k$
\item $(x_{ijk3}\;,\;x_{i,j,(k+1),1})$ for all $i, j, k$
\item $(y_{ijk1}\;,\;y_{ijk2}), (y_{ijk2}\;,\;y_{ijk1}), (y_{ijk2}\;,\;y_{ijk3})$ and $(y_{ijk3}\;,\;y_{ijk2})$ for all $i, j, k$
\item $(y_{ijk3}\;,\;y_{i,j,(k+1),1})$ for all $i, j, k$
\item $(x_{ijk3}\;,\;y_{i,j,(k+2),1})$ for all $i, j, k$
\item $(y_{ijk3}\;,\;b_{a,k+2})$ for all $i, j$, and for $k \neq N-1$, where $a$ is the block containing position $(i,j)$
\item $(y_{i,j,N-1,3}\;,\;b_{a+1,1})$ for all $i, j$ except for the case where both $i = N$ and $j = N$, where $a$ is the block containing position $(i,j)$
\item $(y_{N,N,N-1,3}\;,\;r_{11})$
\item $(r_{ik}\;,\;x_{ijk3})$ for all $i, j, k$
\item $(x_{ijk1}\;,\;v_{ij})$ for all $i, j, k$
\item $(v_{ij}\;,\;r_{ik})$ for all $i, j, k$
\item $(v_{ij}\;,\;t_i)$ for all $i, j$
\item $(t_i\;,\;r_{i+1,1})$ for all $i < N$
\item $(t_N\;,\;c_{11})$
\item $(c_{jk}\;,\;y_{ijk3})$ for all $i, j, k$
\item $(y_{ijk1}\;,\;w_{ij})$ for all $i, j, k$
\item $(w_{ij}\;,\;c_{jk}$ for all $i, j, k$
\item $(w_{ij}\;,\;d_j)$ for all $i, j$
\item $(d_j\;,\;c_{j+1,1})$ for all $j < N$\end{itemize}

\section{Detailed explanation}\label{sec-explanation}

We need to show that every valid Hamiltonian cycle corresponds to a valid Sudoku solution. Note that at this stage, we have not handled any fixed cells, so any valid Sudoku solution will suffice. Fixed cells will be taken care of in Section \ref{sec-fixed}.

\begin{theorem}Every Hamiltonian cycle in the graph constructed in the previous section corresponds to a valid Sudoku solution, and every valid Sudoku solution has corresponding Hamiltonian cycles.\end{theorem}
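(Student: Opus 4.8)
The plan is to prove the two implications separately. For the easier direction---that every valid Sudoku solution yields a Hamiltonian cycle---I would convert the informal traversal of steps 1--5 into an explicit closed walk and then verify it is genuinely Hamiltonian. Given a solution I would follow $s\to b_{11}\to\cdots$, at each block vertex $b_{ak}$ branching to the cell $(i,j)$ at which the solution places $k$ in block $a$, traversing the puzzle gadgets for every $m\neq k$ (and their duplicates) in the ``forward'' sense, then running the row and column phases. The content here is purely a check that each consecutive pair in this walk is an edge of the constructed graph and that every vertex is used exactly once; I would organise it as a vertex-by-vertex count (each $x_{ijkl}$ is consumed once---in the block phase when $m\neq k$, or in the row phase when $m=k$---so that the $3(N-1)$ block-phase vertices plus the $3$ row-phase vertices per cell total $3N^3$, and similarly for $y_{ijkl}$), together with verifying the wrap-around index conventions and the special edges joining block to row to column.

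The substantive direction is soundness, and my main tool would be local degree analysis to force the macro-structure. The vertices $s$ and $f$ carry in/out-degree one on the relevant edges, pinning the cycle to begin $s\to b_{11}$ and to close $d_N\to f\to s$. The key gadget is the middle puzzle vertex $x_{ijk2}$ (and $y_{ijk2}$): its only neighbours, in either direction, are $x_{ijk1}$ and $x_{ijk3}$, so in any Hamiltonian cycle it must be entered from one and exited to the other. This forces each triple $x_{ijk1},x_{ijk2},x_{ijk3}$ to be traversed either as $1\to2\to3$ (``block mode'') or as $3\to2\to1$ (``row mode''), and by inspecting the remaining edges at $x_{ijk1}$ and $x_{ijk3}$ I would show these are the only two possibilities: block mode enters from the $b$/chain edges and exits along the chain or crossing edge, while row mode enters from $r_{ik}$ and exits to $v_{ij}$. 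This binary choice is exactly the encoding of the variable assignments.

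From this forcing I would extract the three constraints. Since each $b_{ak}$ is visited once and its block-mode traversal consumes every puzzle gadget at a single cell $(i,j)$ of block $a$ except the one for $k$, and since gadgets cannot be reused, the $N$ block vertices of block $a$ must map to $N$ distinct cells; hence each cell of each block receives exactly one number and each number occurs once per block, giving constraint 1. The crossing edge $(x_{i,j,k-1,3},y_{i,j,k+1,1})$ guarantees that the skipped number agrees in the $x$- and $y$-copies, so the two copies carry the same assignment, and the unique gadget left unvisited at $(i,j)$ is $(i,j,k)$ in each copy, forcing it into row mode (resp.\ column mode). Finally, in the row phase $r_{ik}$ can only advance to an \emph{unvisited} $x_{ijk3}$, which exists precisely when $k$ is assigned to some cell of row $i$; if row $i$ omits $k$ the cycle cannot continue, so Hamiltonicity forces constraint 2, and the parallel argument in the $y$-copy forces constraint 3.

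I expect the main obstacle to be making the forcing \emph{globally} airtight rather than merely local: one must rule out exotic cycles that mix the two modes or that traverse the phase-transition and end-of-block wrap-around edges in unintended ways, and confirm that visiting each of $t_i,v_{ij},d_j,w_{ij}$ exactly once really forces the block, row and column phases to occur in the stated order and to run to completion. Particular care is needed at the single hand-off edge $(y_{N,N,N-1,3},r_{11})$, which ties the end of the block phase to the start of the row phase at the specific cell $(N,N)$; I would check that this boundary is consistent with the assignments the block phase is allowed to produce, since that is the most delicate point where the argument could fail.
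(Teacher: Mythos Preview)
Your plan is essentially the paper's own argument: the degree-$2$ middle vertices $x_{ijk2},y_{ijk2}$ force each triple into one of two modes, the block phase encodes an assignment, and the row/column phases certify constraints 2 and 3. The paper is organised slightly differently---it phrases the ``no early exit from the $x$-chain (resp.\ $y$-chain)'' property as Assumptions 1 and 2 and first derives the Sudoku constraints conditionally on them---but the content is the same.

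Where you flag the ``main obstacle'' (ruling out exotic cycles that leave the block phase prematurely), the paper's resolution is exactly the observation you already have the ingredients for: if after the block phase two distinct gadgets $x_{ijk\,\cdot}$ and $x_{ijm\,\cdot}$ remain unvisited at the same cell $(i,j)$, then both must later be traversed in row mode, and each row-mode traversal terminates at the \emph{same} vertex $v_{ij}$, forcing a revisit. The analogous argument with $w_{ij}$ handles the $y$-copy. So your anticipated difficulty dissolves once you use the uniqueness of $v_{ij}$ (and $w_{ij}$) that you already mention; you need not separately analyse the wrap-around or hand-off edges in any finer detail than this.
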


\begin{proof}First of all, note that vertices $x_{ijk2}$ are degree 2 vertices, and so they ensure that if vertex $x_{ijk1}$ is visited before $x_{ijk3}$, it must be proceeded by $x_{ijk2}$ and then $x_{ijk3}$. Likewise, if vertex $x_{ijk3}$ is visited before $x_{ijk1}$, it must be proceeded by $x_{ijk2}$ and $x_{ijk1}$. The same argument holds for vertices $y_{ijk2}$. This will ensure that the path any Hamiltonian cycle must take through the $x$ and $y$ vertices is tightly controlled.

Each of the block vertices $b_{ak}$ links to $x_{i,j,(k+1),1}$ for all $(i,j)$ contained in block $a$. One of these edges must be chosen. Suppose number $k$ is to be placed in position $(i,j)$, contained in block $a$. Then the edge $(b_{ak},x_{i,j,(k+1),1})$ is traversed. From here, the cycle must continue through vertices $x_{i,j,(k+1),2}$ and $x_{i,j,(k+1),3}$. It is then able to either exit to one of the $y$ vertices, or continue visiting $x$ vertices. However, as will be seen later, if it exits to the $y$ vertices at this stage, it will be impossible to complete the Hamiltonian cycle. So instead it continues on to $x_{i,j,(k+2),1}$, and so on. Only once all of the $x_{ijml}$ vertices for $m \neq k$ have been visited (noting that $i$ and $j$ are fixed here) can it safely exit to the $y$ vertices -- refer this as Assumption 1 (we will investigate later what happens if Assumption 1 is violated for any $i,j,k$). The exit to $y$ vertices will occur immediately after visiting vertex $x_{i,j,(k-1),3}$, which is linked to vertex $y_{i,j,(k+1),1}$. Note that by Assumption 1, vertices $x_{ijkl}$ are unvisited for $l = 1, 2, 3$. Then, from the $y$ vertices, the same argument as above applies again, and eventually vertex $y_{i,j,(k-1),3}$ is departed, linking to vertex $b_{a,k+1}$ if $k < N$, or to vertex $b_{a+1,1}$ if $k = N$. Refer to the equivalent assumption on visiting the $y$ vertices as Assumption 2. This continues until all the block vertices have been traversed, at which time vertex $y_{N,N,N-1,3}$ links to $r_{11}$. Note that, other than by violating Assumptions 1 or 2, it is not possible to have deviated from the above path. By the time we arrive at $r_{11}$, all the block vertices $b_{ak}$ have been visited. Also, every puzzle vertex $x_{ijkl}$ and duplicate puzzle vertex $y_{ijkl}$ has been visited other than those corresponding to placing number $k$ in position $(i,j)$.

Next, each of the row vertices $r_{ik}$ links to $x_{ijk3}$ for all $i, j, k$. For each $i$ and $k$, one of these edges must be chosen. However, by Assumption 1, all vertices $x_{ijk3}$ have already been visited except for those corresponding to the number $k$ being placed in position $(i,j)$. If the choices in the previous step violate the row constraints, then there will be a row $i$ that does not contain a number $k$, and subsequently there will be no valid edge emanating from vertex $r_{ik}$. Hence, if the choices made in step 2 violate the row constraints, and Assumption 1 is correct, it is impossible to complete a Hamiltonian cycle. If the choices in the previous step satisfy the row constraints, then there should always be precisely one valid edge to choose here. Once vertex $x_{ijk3}$ is visited, vertices $x_{ijk2}$ and $x_{ijk1}$ must follow, at which point the only remaining valid choice is to proceed to vertex $v_{ij}$. From here, any row vertex $r_{im}$ that has not yet been visited can be visited. If all, have been visited, then $t_i$ can be visited instead. Note that once $t_i$ is visited, it is impossible to return to any $r_{ik}$ vertices, so they must all be visited before $t_i$ is visited.

An analogous argument to above can be made for the column vertices $c_{jk}$. Note that if Assumptions 1 and 2 are correct, then vertex $y_{ijkl}$ will be unvisited at the start of step 4 if and only if $x_{ijkl}$ was unvisited at the start of step 3. Therefore, we see that if Assumptions 1 and 2 are correct, then it is only possible to complete the Hamiltonian cycle if the choices made in step 2 correspond to a valid Sudoku solution.

Now consider the situation where Assumption 1 is violated, that is, after step 2 there exists unvisited vertices $x_{ijkl}$ and $x_{ijml}$ for some $i, j$, and $k \neq m$. Then during step 3, without loss of generality, suppose vertex $r_{ik}$ is visited before $r_{im}$. As argued above, this will be followed by vertices $x_{ijk3}$, $x_{ijk2}$, $x_{ijk1}$, at which point visiting vertex $v_{ij}$ is the only available choice. Then later, $r_{im}$ is visited. It must visit $x_{ijm3}$, $x_{ijm2}$, $x_{ijm1}$ and is then, again, forced to proceed to vertex $v_{ij}$. However, since vertex $v_{ij}$ has already been visited, this is impossible and the Hamiltonian cycle cannot be completed. If Assumption 2 is violated, and it is vertices $y_{ijkl}$ and $y_{ijml}$ that are unvisited after step 2, an analogous argument can be made involving step 4. Hence, every Hamiltonian cycle in the graph must satisfy Assumptions 1 and 2. This completes the proof.\end{proof}

Since any valid Sudoku solution has corresponding Hamiltonian cycles, the resulting instance of HCP is equivalent to a blank Sudoku puzzle. In a later section, the method for removing edges based on fixed numbers for a given Sudoku instance is described. Since the instance of HCP can be constructed, and the relevant edges removed, in polynomial time as a function of $N$, the algorithm above constitutes a reduction of Sudoku to the Hamiltonian cycle problem.

\section{Size of \lq\lq blank" instance}\label{sec-blank}

The instance of HCP that emerges from the above conversion consists of $6N^3 + 5N^2 + 2N + 2$ vertices, and $19N^3 + 2N^2 + 2N + 2$ directed edges. For the standard Sudoku puzzle where $N = 9$, this corresponds to a directed graph with $4799$ vertices and $14033$ directed edges.

All of the best HCP heuristic currently available assume that the instance is undirected. There is a well-known conversion of directed HCP to undirected HCP which can be performed as follows. First, produce a new graph which has three times as many vertices as the directed graph. Then add edges to this new graph by the following scheme, where $n$ is the number of vertices in the directed graph:

\begin{enumerate}\item Add edges $(3i-1,3i-2)$ and $(3i-1,3i)$ for all $i = 1, \hdots, n$.
\item For each directed edge $(i,j)$ in the original graph, add edge $(3i,3j-2)$.\end{enumerate}

In the present case, this results in an undirected instance of HCP consisting of $18N^3 + 15N^2 + 6N + 6$ vertices and $31N^3 + 12N^2 + 6N + 6$ edges. This implies that the average degree in the graph grows monotonically with $N$, but towards a limit of $\frac{31}{9}$, so the resultant graph instance is sparse. For $N = 4$, the average degree is just slightly above $3.1$, and for $N = 9$ the average degree is just under $3.3$.

A trick can be employed to reduce the number of vertices in the undirected graph. Consider the vertices in the undirected graph corresponding to the $x$ and $y$ vertices. In particular, consider the set of 9 vertices corresponding to $x_{ijk1}$, $x_{ijk2}$ and $x_{ijk3}$. The nine vertices form an induced subgraph such as that displayed at the top of Figure \ref{fig-subgraph}. There are incoming edges incident on the first and seventh vertices, and outgoing edges incident on the third and ninth vertices. If the induced subgraph is entered via the first vertex, it must be departed via the ninth vertex, or else a Hamiltonian cycle cannot be completed. Likewise, if the induced subgraph is entered via the seventh vertex, it must be departed via the third vertex. It can be seen by inspecting all cases that if the fifth vertex is removed, and a new edge is introduced between the fourth and sixth vertices, the induced subgraph retains these same properties. This alternative choice is displayed at the bottom of Figure \ref{fig-subgraph}. Such a replacement can be made for each triplet $x_{ijkl}$ or $y_{ijkl}$. Hence, we can remove $2N^3$ vertices and $2N^3$ edges from the undirected graph for a final total of $16N^3 + 15N^2 + 6N + 6$ vertices and $29N^3 + 12N^2 + 6N + 6$, although at the cost of raising the average degree by a small amount (roughly between 0.1 and 0.15, depending on $N$.)

\begin{figure}[h!]\begin{center}\includegraphics[scale=0.35]{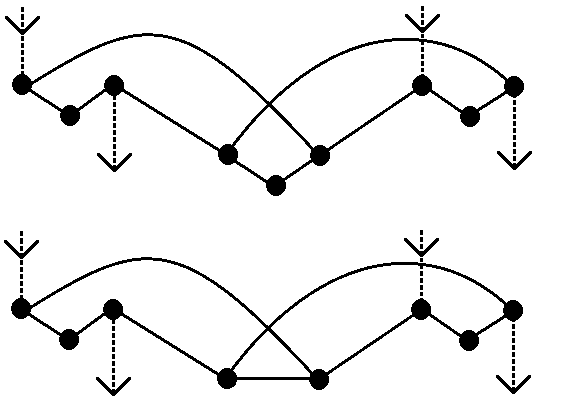}\caption{The induced subgraph created after the conversion to an undirected graph, corresponding to vertices $x_{ijk1}$, $x_{ijk2}$ and $x_{ijk3}$, and an alternative subgraph with one vertex removed.\label{fig-subgraph}}\end{center}\end{figure}

\section{Handling fixed numbers}\label{sec-fixed}

In reality, all meaningful instances of Sudoku have fixed values in some of the $N^2$ cells. Although this could potentially be handled by removing vertices, it would then be necessary to redirect edges appropriately. Instead, it is simpler to remove edges that cannot be used while choosing these fixed values. Once this is performed, a graph simplifying heuristic could then be employed to remove unnecessary vertices if desired.

For each fixed value, $12N - 12$ edges can be identified as redundant, and be removed. However, when there are multiple fixed values, some edges may be identified as redundant multiple times, so $12N - 12$ is only an upper bound on the number of edges that can be removed per fixed value. For example, suppose one cell has a fixed value of 1, and another cell within the same block has a fixed value of 2. From the first fixed value, we know that all other entries in the block must not be 1. From the second fixed value, we know that the second cell must have a value of 2, and hence not 1. Then the edge corresponding to placing a value of 1 in the second cell would be identified as redundant twice. The exact number of redundant edges identified depends on the precise orientation of the fixed values.

For each fixed value $k$ in position $(i,j)$, and block $a$ containing position $(i,j)$, the following sets of edges are redundant and may be removed (an explanation for each set follows the list):

\begin{itemize}\item[(1)] $(b_{ak}\;,\;x_{mnk1})$ for all choices of $m$ and $n$ such that block $a$ contains $(m,n)$, and also $(m,n) \neq (i,j)$
\item[(2)] $(b_{am}\;,\;x_{ijm1})$ for $m \neq k$
\item[(3)] $(x_{m,n,(k-1),3}\;,\;y_{m,n,(k+1),1})$ for all choices of $m$ and $n$ such that block $a$ contains $(m,n)$, and also $(m,n) \neq (i,j)$
\item[(4)] $(x_{i,j,(m-1),3}\;,\;y_{i,j,(m+1),1})$ for $m \neq k$
\item[(5a)] If $k < N$ : $(y_{m,n,(k-1),3}\;,\;b_{a,k+1})$ for all choices of $m$ and $n$ such that block $a$ contains $(m,n)$, and also $(m,n) \neq (i,j)$
\item[(5b)] If $k = N$ and $a < N$ : $(y_{m,n,(k-1),3}\;,\;b_{a+1,1})$ for all choices of $m$ and $n$ such that block $a$ contains $(m,n)$, and also $(m,n) \neq (i,j)$
\item[(5c)] If $k = N$ and $a = N$ : $(y_{m,n,(k-1),3}\;,\;r_{11})$ for all choices of $m$ and $n$ such that block $a$ contains $(m,n)$, and also $(m,n) \neq (i,j)$
\item[(6a)] $(y_{i,j,(m-1),3}\;,\;b_{a,m+1})$ for $m \neq k$ and $m \neq N$
\item[(6b)] If $k < N$ and $a < N$ : $(y_{i,j,(N-1),3}\;,\;b_{a+1,1})$
\item[(6c)] If $k < N$ and $a = N$ : $(y_{i,j,(N-1),3}\;,\;r_{11})$
\item[(7)] $(r_{ik}\;,\;x_{imk3})$ for all $m \neq k$
\item[(8)] $(x_{imk1}\;,\;v_{im})$ for all $m \neq k$
\item[(9)] $(r_{im}\;,\;x_{ijm3})$ for all $m \neq k$
\item[(10)] $(c_{jk}\;,\;y_{mjk3})$ for all $m \neq k$
\item[(11)] $(y_{mjk1}\;,\;w_{mk})$ for all $m \neq k$
\item[(12)] $(c_{jm}\;,\;y_{ijm3})$ for all $m \neq k$\end{itemize}

The edges in set (1) correspond to the option of placing a value of $k$ elsewhere in block $a$. The edges in set (2) correspond to the option of picking a value other than $k$ in position $(i,j)$. Those two sets of incorrect choices would lead to the edges from sets (3) and (4) respectively being used to transfer from the $x$ vertices to the $y$ vertices, and so those edges are also redundant.

The edges in (5a)--(5c) correspond to the edges that return from the $y$ vertices to the next block vertex after an incorrect choice is made (corresponding to the set (1)). If $k = N$ then the next block vertex is actually for the following block, rather than for the next number in the same block. If $k = N$ and $a = N$ then all block vertices have been visited and the next vertex is actually the first row vertex.

Likewise, the edges in (6a)--(6c) correspond to the edges that return from the $y$ vertices after an incorrect choice is made (corresponding to the set (2)). Note that if $k = N$, there are $N-1$ redundant edges in (6a). If $k < N$ there are $N-2$ redundant edges in (6a) and then one additional redundant edge from either (6b) or (6c).

The edges in set (7) correspond to the option of finding a value of $k$ in row $i$ at a position other than $(i,j)$, which is impossible. The edges in set (8) correspond to visiting the end puzzle vertex after making an incorrect choice from (7). The edges in set (9) correspond to the option of finding a value other than $k$ in row $i$ and position $(i,j)$, which is also impossible. Analogous arguments can be made for the edges in sets (10)--(12), except for columns instead of rows.

Each of sets (1)-(4) and (7)-(12) identify $N-1$ redundant edges each. As argued above, the relevant sets from (5a)--(5c) will contribute $N-1$ more redundant edges, as well the relevant sets from (6a)--(6c). Hence, the maximum number of edges that can be removed per number is $12N - 12$ for each fixed value.

\section{Recovering the Sudoku solution from a Hamiltonian cycle}\label{sec-solution}

The constructive algorithm above produces a HCP instance for which each solution corresponds to a valid Sudoku solution Once such a solution is obtained, the following algorithm reconstructs the corresponding Sudoku solution:

Denote by $h$ the Hamiltonian cycle obtained. For each $i = 1, \hdots, N$ and $j = 1, \hdots, N$, find vertex $v_{ij}$ in $h$. Precisely one of its adjacent vertices in $h$ will be of the form $x_{ijk1}$ for some value of $k$. Then, number $k$ can be placed in the cell in the $i$th row and $j$th column in the Sudoku solution.

Suppose that the vertices are labelled in the order given in Section \ref{sec-conversion}. That is, $s$ is labelled as 1, $f$ is labelled as 2, the $b_{ak}$ vertices are labelled $3, 4, \hdots, N^2-2$, and so on. Then, for each $i$ and $j$, vertex $v_{ij}$ will be labelled $3N^3 + 3N^2 + (i+1)N + (j+2)$, and vertex $x_{ijk1}$ will be labelled $3iN^2 + (3j-1)N + 3k$. Of course, if the graph has been converted to an undirected instance, or if it has been reduced in size by a graph reduction heuristic, these labels will need to be adjusted appropriately.

\section{Reducing the size of the HCP instances}\label{sec-reducing}

After constructing the HCP instances using the above method, graph reduction techniques can be applied. Most meaningful instances of Sudoku will have many fixed values, which in turn leads to an abundance of degree 2 vertices.

In order to test the effectiveness of such techniques, a very simple reduction algorithm was used. Iteratively, the algorithm iteratively checks the following two conditions until there are no applicable reductions remaining:

\begin{enumerate}\item If two adjacent vertices are both degree 2, they can be contracted to a single vertex.
\item If a vertex has two degree 2 neighbours, all of its incident edges going to other vertices can be removed.\end{enumerate}

Note that the second condition above leads to three adjacent degree 2 vertices which will in turn be contracted to a single vertex. The removal of edges when the second condition is satisfied often leads to additional degree 2 vertices being formed which allows the algorithm to continue reducing.

Note also that this simple graph reduction heuristic is actually hampered by the graph reduction method described in Section \ref{sec-blank}, since that method eliminates many degree 2 vertices. It is likely that a more sophisticated graph reduction heuristic could be developed that incorporates both methods.

The above heuristic was applied to both a well-formed (that is, uniquely solvable) Sudoku instance with 35 fixed values, as well as one of the Sudoku instances from the repository of roughly 50000 instances maintained by Royle \cite{royle}. The instances in that repository all contain precisely 17 fixed numbers, and are all well-formed; it was recently proved via a clever exhaustive computer search that 17 is the minimal number of fixed values for a well-formed Sudoku problem with 9 symbols \cite{min17}. The two instances tested are displayed in Figure \ref{fig-instances}.

\begin{figure}[h!]\begin{center}\includegraphics[scale=0.6]{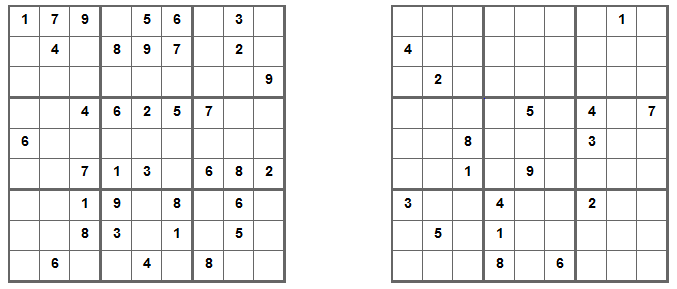}\caption{Two well-formed Sudoku instances with 35 fixed values and 17 fixed values respectively.\label{fig-instances}}\end{center}\end{figure}

After the simple reduction heuristic above was applied to the first Sudoku instance, it had been reduced from an undirected instance with 14397 vertices and 22217 edges, to an equivalent instance with 8901 vertices and 14175 edges. Applying the above reduction algorithm to the second Sudoku instance from Royle's repository reduced it from an undirected instance with 14397 vertices and 22873 edges, to an equivalent instance with 12036 vertices and 19301 edges. In both cases the reduction is significant, although obviously more there are greater opportunities for reduction when there are more fixed values.

Both instances were solved by Concorde \cite{concorde} which is arguably the best algorithm for solving HCP instances containing large amount of structure, as its branch-and-cut method is very effective at identifying sets of arcs that must be fixed all at once, or not at all, particularly in sparse graphs. Technically, Concorde actually converts the HCP instance to an equivalent TSP instance but does so in an efficient way. The first instance was solved during Concorde's presolve phase, while the second instance required 20 iterations of Concorde's branch and cut algorithm\footnote{It should be noted that Concorde does use a small amount of randomness in its execution. The random seed used in this experiment was 1453347272.} to discover a solution. This would seem to indicate that the first Sudoku instance can be solved without requiring any amount of guessing. The two solutions were then interpreted via the algorithm in Section \ref{sec-solution} to provide solutions to the initial Sudoku instances; those solutions are displayed in Figure \ref{fig-solutions}.

\begin{figure}[h!]\begin{center}\includegraphics[scale=0.6]{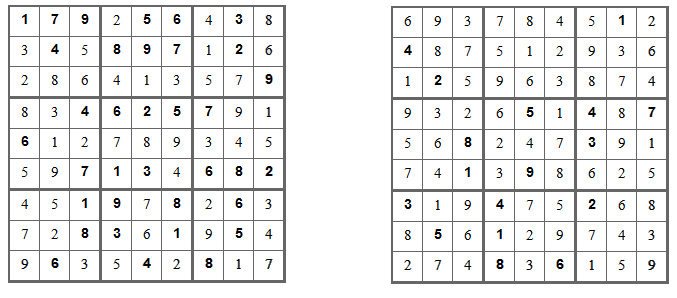}\caption{The solutions to the Sudoku instances in Figure \ref{fig-instances}, as interpreted from the Hamiltonian cycles of the converted HCP instances.\label{fig-solutions}}\end{center}\end{figure}

\end{document}